\PassOptionsToPackage{unicode}{hyperref}
\PassOptionsToPackage{hyphens}{url}
\PassOptionsToPackage{dvipsnames,svgnames,x11names}{xcolor}
\documentclass[
  11pt]{article}
\usepackage{algorithm}
\usepackage{algorithmic}
\usepackage{multirow}
\usepackage{amsmath,amssymb}
\usepackage{iftex}
\usepackage{bbm}
\usepackage{amsthm}
\usepackage{tikz}
\usetikzlibrary{positioning}
\usetikzlibrary{arrows}
\ifPDFTeX
  \usepackage[T1]{fontenc}
  \usepackage[utf8]{inputenc}
  \usepackage{textcomp} 
\else 
  \usepackage{unicode-math}
  \defaultfontfeatures{Scale=MatchLowercase}
  \defaultfontfeatures[\rmfamily]{Ligatures=TeX,Scale=1}
\fi
\usepackage{lmodern}
\ifPDFTeX\else  
\fi
\IfFileExists{upquote.sty}{\usepackage{upquote}}{}
\IfFileExists{microtype.sty}{
  \usepackage[]{microtype}
  \UseMicrotypeSet[protrusion]{basicmath} 
}{}
\makeatletter
\@ifundefined{KOMAClassName}{
  \IfFileExists{parskip.sty}{%
    \usepackage{parskip}
  }{
    \setlength{\parindent}{0pt}
    \setlength{\parskip}{6pt plus 2pt minus 1pt}}
}{
  \KOMAoptions{parskip=half}}
\makeatother
\usepackage{xcolor}
\makeatletter
\ifx\paragraph\undefined\else
  \let\oldparagraph\paragraph
  \renewcommand{\paragraph}{
    \@ifstar
      \xxxParagraphStar
      \xxxParagraphNoStar
  }
  \newcommand{\xxxParagraphStar}[1]{\oldparagraph*{#1}\mbox{}}
  \newcommand{\xxxParagraphNoStar}[1]{\oldparagraph{#1}\mbox{}}
\fi
\ifx\subparagraph\undefined\else
  \let\oldsubparagraph\subparagraph
  \renewcommand{\subparagraph}{
    \@ifstar
      \xxxSubParagraphStar
      \xxxSubParagraphNoStar
  }
  \newcommand{\xxxSubParagraphStar}[1]{\oldsubparagraph*{#1}\mbox{}}
  \newcommand{\xxxSubParagraphNoStar}[1]{\oldsubparagraph{#1}\mbox{}}
\fi
\makeatother

\usepackage{longtable,booktabs,array}
\usepackage{calc} 
\usepackage{etoolbox}
\makeatletter
\patchcmd\longtable{\par}{\if@noskipsec\mbox{}\fi\par}{}{}
\makeatother
\IfFileExists{footnotehyper.sty}{\usepackage{footnotehyper}}{\usepackage{footnote}}
\makesavenoteenv{longtable}
\usepackage{graphicx}
\makeatletter
\def\maxwidth{\ifdim\Gin@nat@width>\linewidth\linewidth\else\Gin@nat@width\fi}
\def\maxheight{\ifdim\Gin@nat@height>\textheight\textheight\else\Gin@nat@height\fi}
\makeatother
\setkeys{Gin}{width=\maxwidth,height=\maxheight,keepaspectratio}
\makeatletter
\def\fps@figure{htbp}
\makeatother

\makeatletter
\@ifpackageloaded{caption}{}{\usepackage{caption}}
\AtBeginDocument{%
\ifdefined\contentsname
  \renewcommand*\contentsname{Table of contents}
\else
  \newcommand\contentsname{Table of contents}
\fi
\ifdefined\listfigurename
  \renewcommand*\listfigurename{List of Figures}
\else
  \newcommand\listfigurename{List of Figures}
\fi
\ifdefined\listtablename
  \renewcommand*\listtablename{List of Tables}
\else
  \newcommand\listtablename{List of Tables}
\fi
\ifdefined\figurename
  \renewcommand*\figurename{Figure}
\else
  \newcommand\figurename{Figure}
\fi
\ifdefined\tablename
  \renewcommand*\tablename{Table}
\else
  \newcommand\tablename{Table}
\fi
}
\@ifpackageloaded{float}{}{\usepackage{float}}
\floatstyle{ruled}
\@ifundefined{c@chapter}{\newfloat{codelisting}{h}{lop}}{\newfloat{codelisting}{h}{lop}[chapter]}
\floatname{codelisting}{Listing}

\makeatother
\makeatletter
\@ifpackageloaded{caption}{}{\usepackage{caption}}
\@ifpackageloaded{subcaption}{}{\usepackage{subcaption}}
\makeatother

\ifLuaTeX
  \usepackage{selnolig}  
\fi
\usepackage[]{natbib}
\usepackage{bookmark}

\IfFileExists{xurl.sty}{\usepackage{xurl}}{} 
\hypersetup{
  pdftitle={Optimal information deletion and Bayes' theorem},
  pdfauthor={Hans Montcho; H\r{a}vard Rue,},
  pdfkeywords={3 to 6 keywords, that do not appear in the title},
  colorlinks=true,
  linkcolor={blue},
  filecolor={Maroon},
  citecolor={Blue},
  urlcolor={Blue},
  pdfcreator={LaTeX via pandoc}}

\newtheorem{theorem}{Theorem}
\newtheorem{corollary}{Corollary}
\newtheorem{proposition}{Proposition}
\newtheorem{definition}{Definition}

\newcommand{\E}{\mathbb E}
\newcommand{\KL}{\operatorname{KL}}
\newcommand{\dd}{\,\mathrm d}
\newcommand{\p}{\pi}
\newcommand{\anon}{1}
\date{}

\begin{document}

\def\spacingset#1{\renewcommand{\baselinestretch}%
{#1}\small\normalsize} \spacingset{1}


\if1\anon
{
  \title{\bf Coherent information deletion:  Bayes' theorem and generalized
Bayesian unlearning}
  \author{\bf{Hans Montcho, H\r{a}vard Rue} \\
    Statistics Program \\
    Computer, Electrical and Mathematical Sciences and Engineering Division \\
    King Abdullah University of Science and Technology, Thuwal, Saudi Arabia 
}
  \maketitle
} \fi

\if0\anon
{
  \bigskip
  \bigskip
  \bigskip
  \begin{center}
    {\LARGE\bf Optimal information deletion and Bayes' theorem}
\end{center}
  \medskip
} \fi

\bigskip
\begin{abstract}
Bayes’ theorem admits an information-processing interpretation due to \citet{zellner1988optimal}: 
under the  Shannon-information criterion, the posterior is the unique rule 
that processes prior and data information without information loss. We revisit these 
ideas, but from the perspective of information deletion. Given a posterior based on a complete dataset, what distribution should 
replace it when a subset of the data is removed? We define information deletion 
using the same information conservation principle as \citet{zellner1988optimal}, and show that the 
optimal post-deletion distribution is exactly the leave-data-out posterior.  We then
extend the framework beyond likelihood-based inference from  Bayes to the generalized
Bayesian updating of \citet{bissiri2016general} based on loss functions. 
We introduce a sequential coherence 
requirement for deletion, under which,  removing two pieces of information jointly 
is equivalent to removing them successively. 
The resulting coherent deletion rule exactly recovers the generalized Bayesian
posterior based only on the retained data. Restricting these optimization 
problems to variational families yields corresponding formulations of variational 
Bayesian and generalized Bayesian unlearning.

\end{abstract}

\noindent%
{\it Keywords:}  Bayesian unlearning, generalized Bayes' theorem, information deletion.
\vfill

\newpage
\spacingset{1.8} 

\section{Introduction}
\label{sec-introduction}

Bayesian inference can be viewed as an information-processing operation: 
prior information and information supplied by the data are transformed into a
posterior distribution.  In a seminal paper, \citet{zellner1988optimal} proved
that, under the Shannon-information criterion,  Bayes' theorem  is the
optimal information processing rule. The result gives a variational
interpretation of Bayes’ theorem as the solution of an infinite dimensional
optimization problem, and has subsequently been connected to
generalized Bayesian inference; see for example \citet{bissiri2016general, 
knoblauch2022optimization}.

In this paper, we revisit these ideas, but from the perspective 
of data removal. This problem arises naturally in cross validation in statistics
and data deletion in computer science, where the objective is to remove the 
contribution of a subset of observations from an already fitted
model without refitting. Existing approaches
to Bayesian unlearning are often formulated directly as approximations to
the retrained posterior; see \citet{nguyen2020variational, rawat2022challenges}.  Here, we use 
the same information processing as \citet{zellner1988optimal},  and we 
investigate how to update a posterior distribution into a
\emph{post-deletion distribution} when a subset of  data is removed.
In this context, a rule which does not destroy or create nonexistent
information is called the \emph{optimal information deletion rule},  
and we show that it is the leave-data-out posterior
from  Bayes' theorem.  This result gives a learning-unlearning duality of 
Bayes' theorem in a precise sense: it is optimal both for adding
information and for removing it. 

A natural next question is whether this
information-processing view of Bayes' theorem extends beyond likelihood-based 
updating and, in particular, whether an analogous principle can be formulated
for the removal of information. \citet{bissiri2016general} extended Bayesian updating beyond likelihood-based
models by representing the information in the data through a loss function 
and defining the updated belief as the solution of a decision problem that 
balances expected data loss against deviation from the current belief distribution.
A key requirement of their construction is \emph{coherent updating}: processing
two pieces of information sequentially must yield the same result as processing
them jointly, which in turn characterizes the Kullback--Leibler (KL) divergence 
as the appropriate penalty for departing from the prior. We develop the
complementary problem, and ask  how a belief distribution should change 
when a previously incorporated loss contribution is removed. We impose
the analogous requirement of \emph{coherent information deletion}, implying
that deleting two pieces of information sequentially must be equivalent to 
deleting them jointly. Under the same class of divergence-based penalties 
considered by \citet{bissiri2016general}, this coherence condition again
characterizes the KL divergence. The resulting decision problem yields a principled 
generalized Bayesian unlearning rule and, when applied to a generalized 
posterior, recovers exactly the generalized Bayesian update based only 
on the retained information.

We organize the rest of this paper as follows: in Section \ref{sec-optimal_addition}, 
we  review Bayes' theorem as an information processing rule.
In Section \ref{sec-information_deletion}, we define 
the data removal concept, formalize it as an information deletion problem,
and derive the functional to minimize in order to  conserve information. 
In Section \ref{sec-optimal_info_del_bayes}, we prove its equivalence to the leave-data-out
posterior from Bayes theorem. Section \ref{sec-coherent_generalized}
generalizes the previous results to deletion under loss functions, while Section 
\ref{sec-gen_var_bayes} introduces generalized variational Bayesian unlearning
by restricting the deletion optimization problem to a tractable variational family.
We summarize the main findings in Section \ref{sec-discussion}.

\section{Optimal information processing rule}
\label{sec-optimal_addition}

\citet{zellner1988optimal} started by hypothesizing a  statistical model $\pi(y|\theta)$ 
indexed by  $\theta \in \Theta$, for given data $y$, a distribution $\pi(\theta|I)$  
for $\theta$ based on prior information $I$, 
and together the prior and likelihood represent
the input information that needs to be processed. 
The outputs  are given by what he called the \emph{postdata
distribution} $q(\theta|y)$, and the probability distribution
for the data $ \pi(y|I) =\int \pi(y|\theta) \pi(\theta|I)\dd\theta$.
An information processing rule  (IPR) is a procedure that transforms
input into  output information, and the optimal IPR is  one that does 
not destroy or add some nonexistent information. We depict the 
IPR in Figure \ref{fig-ipr}.

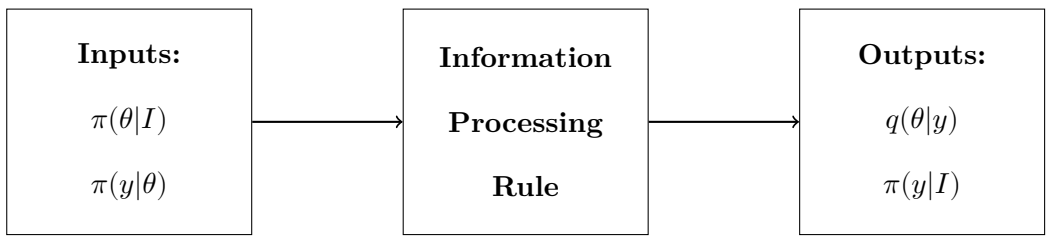
\begin{figure}[H]
    \centering
    \begin{tikzpicture}[
  box/.style={
    draw,
    rectangle,
    text width=2cm,
    align=center,
    minimum height=2cm
  }
]

\node[box] (A) {
   \textbf{Inputs:} \\
  $\pi(\theta|I)$\\
  $\pi(y|\theta)$\\
};

\node[box, right=2cm of A] (B) {
 \textbf{Information} \\
 \textbf{Processing} \\
 \textbf{Rule}
};

\node[box, right=2cm of B] (C) {
   \textbf{Outputs:} \\
  $q(\theta|y)$\\
  $\pi(y|I)$\\
};

\draw[->,thick] (A) -- node[above] {} (B);
\draw[->, thick] (B) -- node[above] {} (C);
\end{tikzpicture}
\caption{Illustration of the Information Processing Rule}
\label{fig-ipr}
\end{figure}

\citet{zellner1988optimal} used  Shannon's entropy  to measure the information content,
and defined the information loss as  $\Delta[q(\theta|y)]$ in 
Equation \eqref{eq-eq-info_loss_zellner}.
\begin{eqnarray}
    \Delta[q(\theta|y)] &=& \text{Output Information  -   Input Information} \nonumber  \\
    &=& \E_{q(\theta|y)}\bigg[\log q(\theta|y) + \log\pi(y|I)\bigg] - \E_{q(\theta|y)}\bigg[\log  \pi(\theta|I) +   \log\pi(y|\theta) \bigg] \nonumber \\   
   &=& \E_{q(\theta|y)}\bigg[\log \frac{q(\theta|y) }{ \pi(\theta|I)}\frac{\pi(y|I)}{\pi(y|\theta) } \bigg].  
  \label{eq-eq-info_loss_zellner}
\end{eqnarray}

Using a variational  formalism, \citet{zellner1988optimal} proved that the optimal IPR 
\begin{eqnarray}
\label{eq-optimal_ipr}
   q(\theta|y) = \pi(\theta|y) = \frac{\pi(\theta|I)\pi(y|\theta)}{\pi(y|I)}
\end{eqnarray}

coincides   with Bayes' posterior  by minimizing $\Delta[q(\theta|y)]$ over all probability densities on  $\Theta$.

An important implication of this result comes from the interpretation of Bayes'
theorem as the solution of an infinite dimensional optimization problem,
therefore approximate solutions could be obtained by restricting the optimization
space. More specifically, optimizing over a variational family $\mathcal{Q}$ 
gives an approximate  $\tilde{q}(\theta|y)$ in Equation \eqref{eq-approx_learn} 
and represents  a formal justification for variational inference \citep{blei2017variational} as an approximate information processing rule,

\begin{eqnarray}
    \label{eq-approx_learn}  
        \tilde{q}(\theta|y) &=& \underset{ q \; \in \; \mathcal{Q}}{\text{argmin}}\; \Delta[q(\theta|y)].  
\end{eqnarray}

In this paper, we  ask if an analogous characterization is available when 
information is deleted rather than added. We call such a rule \emph{the optimal
information deletion rule}. 

\section{Information concept and  deletion rule}
\label{sec-information_deletion}

A practical motivation for data removal is cross-validation and data deletion. 
In cross-validation \citep{vehtari2012survey}, a subset of the data is used to 
fit a statistical model, while the left-out subset is used to assess the quality
of the fitted model using a scoring rule. The procedure is repeated over a 
designed partition of the complete data. To avoid multiple model refits, we 
usually resort to approximate cross validation and aim for an approximate  
leave-data-out posterior distribution.  In comparison, in data deletion 
\citep{bourtoule2021machine}, the goal is to delete users' information, 
not only from databases, but also from a statistical model already 
fitted and deployed, without refitting it  from scratch. Both problems share a
commonality:  the need to update a posterior distribution  given the complete 
data after removing  some data.  

\subsection{The information  deletion loss}

The inputs to our information deletion rule (IDR) are  the outputs from the IPR in Section \ref{sec-optimal_addition},
i.e. the posterior distribution given the complete data $\pi(\theta|y)$ and the  
distribution for the data  $\pi(y|I)$ given prior information $I$. 
We partition the data into two parts $y=(y_g, y_{-g})$, where $y_g$ represents  
the subset  whose information content needs to be removed from $\pi(\theta|y)$,
and we denote its  contribution by $\pi(y_g|\theta, y_{-g})$, which is also an input. 
The use of the conditional contribution avoids imposing conditional independence between $y_g$ and $y_{-g}$ given $\theta$.

 Regarding the outputs, after removing the information content of $y_g$,
 we are left with the distribution for the rest of the data 
 $\pi(y_{-g}|I)= \int_\Theta \pi(y_{-g}|\theta)\pi(\theta|I)d\theta$ given prior information $I$ 
 and the \emph{post-deletion}  distribution $q(\theta|y_{-g})$. 
 We summarize the components of the information deletion process in Figure  \ref{fig-idr}.

\begin{figure}[H]
\centering
\begin{tikzpicture}[box/.style={draw, rectangle,text width=2cm,
align=center,minimum height=2cm}]

\node[box] (A) {
   \textbf{Inputs:} \\ $\pi(\theta|y)$\\$\pi(y|I)$\\$\pi(y_g|\theta, y_{-g})$
};

\node[box, right=2cm of A] (B) {
 \textbf{Information} \\ \textbf{Deletion} \\ \textbf{Rule}
};

\node[box, right=2cm of B] (C) {
   \textbf{Outputs:} \\ $q(\theta|y_{-g})$\\ $\pi(y_{-g}|I)$
};

\draw[->,thick] (A) -- node[above] {} (B);
\draw[->, thick] (B) -- node[above] {} (C);
\end{tikzpicture}
\caption{Illustration of the Information Deletion Rule.}
\label{fig-idr}
\end{figure}

By analogy with the information accounting used for Bayesian updating, 
we define the loss of information  $\Delta[q(\theta|y_{-g})]$ 
in Equation \eqref{eq-info_loss}, and  we note the
\emph{negative sign(-)} in front of $\log \pi(y_g|\theta)$ 
in Equation \eqref{eq-removal_info_yg}  due to the removal of its information content. 

\begin{eqnarray}
    \Delta[q(\theta|y_{-g})] &=& \text{Output Information -  Input Information}  \nonumber\\
    &=& \mathbb{E}_{q(\theta|y_{-g})}\bigg[\log q(\theta|y_{-g}) + \log\pi(y_{-g}|I)\bigg] \nonumber \\
    &-& \mathbb{E}_{q(\theta|y_{-g})}\bigg[\log  \pi(\theta|y) + \log \pi(y|I) - \log\pi(y_g|\theta, y_{-g}) \bigg] \label{eq-removal_info_yg} \\   
  &=& \mathbb{E}_{q(\theta|y_{-g})}\bigg[\log \frac{q(\theta|y_{-g}) }{ \pi(\theta|y)}\frac{\pi(y_{-g}|I)\pi(y_g|\theta, y_{-g})}{\pi(y|I) } \bigg].  \label{eq-info_loss}
\end{eqnarray}

\subsection{The deletion rule}

Let $q(\theta|y_{-g})$ denote any candidate probability density on 
$\Theta$ after deleting $y_g$. Define 

\begin{equation}
 r_g(\theta)=\frac{\p(\theta\mid y)}{\pi(y_g|\theta, y_{-g})},
\qquad
Z(y_g)=\int_\Theta r_g(\theta)\dd\theta,
\label{eq-rg}
\end{equation}
and assume $0<Z(y_g)<\infty$ and $r_g(\theta)>0$ wherever $q(\theta|y_{-g})>0$.
The following observation is the central technical step.

\begin{proposition}[Information deletion loss is a KL divergence] \label{prop1}
Under the previous assumptions,
\begin{equation}
\Delta[q(\theta|y_{-g})]=\KL\!\left(q(\theta|y_{-g} )\,\middle\|\,\frac{r_g(\theta)}{Z(y_g)}\right).
\label{eq-kl_form}
\end{equation}
\end{proposition}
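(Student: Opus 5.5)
The plan is to rewrite the integrand of the information loss \eqref{eq-info_loss} in terms of $r_g$ from \eqref{eq-rg}, and then identify the remaining $\theta$-free factor with the normalizing constant $Z(y_g)$. Writing the ratio inside the logarithm of \eqref{eq-info_loss} as
\begin{equation*}
\frac{q(\theta|y_{-g})}{\pi(\theta|y)}\,\frac{\pi(y_{-g}|I)\,\pi(y_g|\theta,y_{-g})}{\pi(y|I)}
= \frac{q(\theta|y_{-g})}{r_g(\theta)}\cdot\frac{\pi(y_{-g}|I)}{\pi(y|I)},
\end{equation*}
we obtain
\begin{equation*}
\Delta[q(\theta|y_{-g})] = \E_{q(\theta|y_{-g})}\!\left[\log\frac{q(\theta|y_{-g})}{r_g(\theta)}\right] + \log\frac{\pi(y_{-g}|I)}{\pi(y|I)}.
\end{equation*}
Since $\KL(q\,\|\,r_g/Z(y_g)) = \E_q[\log q - \log r_g] + \log Z(y_g)$, it then suffices to show that
\begin{equation*}
Z(y_g) = \frac{\pi(y_{-g}|I)}{\pi(y|I)}.
\end{equation*}
The assumption that $r_g>0$ wherever $q(\theta|y_{-g})>0$ ensures that all logarithms are well defined on the support of $q$. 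The assumption $0<Z(y_g)<\infty$ ensures that $r_g/Z(y_g)$ is a genuine probability density, so the right-hand side of \eqref{eq-kl_form} is a bona fide KL divergence.

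The key step is computing $Z(y_g)$, and I will do it in three moves.

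\textbf{Expand the full-data posterior.} Using Bayes' theorem \eqref{eq-optimal_ipr},
\begin{equation*}
\pi(\theta|y)=\frac{\pi(\theta|I)\,\pi(y|\theta)}{\pi(y|I)}.
\end{equation*}

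\textbf{Factor the likelihood.} By the chain rule, with no conditional independence of $y_g$ and $y_{-g}$ given $\theta$ assumed,
\begin{equation*}
\pi(y|\theta)=\pi(y_{-g}|\theta)\,\pi(y_g|\theta,y_{-g}).
\end{equation*}
Substituting this, the conditional contribution cancels and
\begin{equation*}
r_g(\theta)=\frac{\pi(\theta|I)\,\pi(y_{-g}|\theta)}{\pi(y|I)}.
\end{equation*}

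\textbf{Integrate over $\Theta$.} Using the definition $\pi(y_{-g}|I)=\int_\Theta \pi(y_{-g}|\theta)\pi(\theta|I)\dd\theta$ gives exactly the required identity for $Z(y_g)$. As a by-product, $r_g/Z(y_g)$ is the leave-data-out posterior $\pi(\theta|y_{-g})$, which anticipates the next section.

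The main obstacle is bookkeeping rather than analysis. The integrand in \eqref{eq-removal_info_yg} must be read with the conditional contribution $\pi(y_g|\theta,y_{-g})$ throughout, so that the chain-rule factorization applies without independence assumptions. We also need $\pi(y|I)>0$ and the marginal $\pi(y_{-g}|I)$ to be the one induced by the same joint model. The finiteness of $Z(y_g)$ is assumed, so no integrability argument is needed beyond that. If one prefers not to invoke the explicit form of $Z(y_g)$, an alternative is to note that both sides differ by a $\theta$-free constant. Matching them would still require the identity above, so I will proceed with the direct computation.
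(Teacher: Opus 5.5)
Your proposal is correct and follows essentially the same route as the paper: both reduce the claim to the identity $Z(y_g)=\pi(y_{-g}|I)/\pi(y|I)$. That identity comes from expanding $\pi(\theta|y)$ by Bayes' theorem, cancelling $\pi(y_g|\theta,y_{-g})$ via the chain rule $\pi(y|\theta)=\pi(y_{-g}|\theta)\pi(y_g|\theta,y_{-g})$, integrating over $\Theta$, and substituting back into \eqref{eq-info_loss}. The only cosmetic difference is that the paper also applies Bayes' theorem to the leave-data-out posterior and phrases the integration step as $\int_\Theta \pi(\theta|y_{-g})\dd\theta=1$, whereas you integrate $\pi(\theta|I)\pi(y_{-g}|\theta)$ directly against the definition of $\pi(y_{-g}|I)$.
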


The proof is given in the Supplementary Material

\section{Optimal information deletion and leave-data-out Bayes}
\label{sec-optimal_info_del_bayes}

We can now state the main results.

\begin{theorem}[Optimal information deletion]\label{theo1}
Assume $0<Z(y_g)<\infty$. The unique minimizer of the information-deletion loss $\Delta[q(\theta|y_{-g})] $ over all probability densities on $\Theta$ is
\begin{equation}
 q^*(\theta|y_{-g})
=\frac{\p(\theta\mid y)/\pi(y_g|\theta, y_{-g})}
{\int_\Theta \p(\theta\mid y)/\pi(y_g|\theta, y_{-g})\dd\theta},
\end{equation}
and the minimum information loss is zero.
\end{theorem}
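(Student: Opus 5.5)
The plan is to reduce the theorem to Gibbs' inequality via Proposition~\ref{prop1}. Set $p^*(\theta)=r_g(\theta)/Z(y_g)$. The assumption $0<Z(y_g)<\infty$ makes $p^*$ a genuine probability density on $\Theta$, and it coincides with the displayed $q^*(\theta|y_{-g})$. For any candidate $q$ with $r_g>0$ on its support, Proposition~\ref{prop1} gives $\Delta[q]=\KL(q\,\|\,p^*)$. The minimization of $\Delta$ is therefore the minimization of a KL divergence to a fixed target over all densities.

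The key step is the nonnegativity of KL together with its equality case. Using Jensen's inequality for the strictly concave $\log$,
\begin{equation*}
-\KL(q\,\|\,p^*)=\E_{q}\left[\log\frac{p^*(\theta)}{q(\theta)}\right]\le\log\E_{q}\left[\frac{p^*(\theta)}{q(\theta)}\right]=\log\int_{\{q>0\}}p^*(\theta)\dd\theta\le 0 .
\end{equation*}
Equality in the first inequality forces $p^*/q$ to be constant $q$-a.e. Equality in the second forces $p^*$ to put all its mass on $\{q>0\}$. Together these give $q=p^*$ almost everywhere. Hence $\Delta[q]\ge 0$, with equality exactly when $q=q^*$. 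Plugging in $q=q^*$ makes the log-ratio vanish identically, so the minimum value is $0$, and uniqueness holds up to null sets.

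The main obstacle is the set of candidates not covered by Proposition~\ref{prop1}. These are densities $q$ placing mass where $r_g=0$, i.e.\ where $\pi(\theta|y)=0$, or where $\Delta$ is not well defined. For such $q$, the integrand $\log(q/r_g)$ is $+\infty$ on a set of positive $q$-mass. I would adopt the convention $\Delta[q]=+\infty$ for these $q$, consistent with the usual convention $\KL(q\|p)=\infty$ when $q\not\ll p$, so they cannot be minimizers. I would also note that $q^*$ itself satisfies the support condition, since $q^*>0$ exactly where $r_g>0$.

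A secondary technical point is making sure the expectation in \eqref{eq-info_loss} is well defined, given that $\log q-\log p^*$ may not be integrable. Here I would rely on the standard fact that the negative part of $q\log(q/p^*)$ is integrable, because $\int q\,(\log(q/p^*))^-\le\int p^*\le 1$. This lets $\KL$ take values in $[0,\infty]$ without ambiguity.
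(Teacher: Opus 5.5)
Your proposal is correct and follows the paper's route: Proposition~\ref{prop1} rewrites $\Delta$ as $\KL(q\,\|\,q^*)$, and nonnegativity of KL with its equality case then gives the zero minimum and uniqueness almost everywhere. You add details the paper leaves implicit: a Jensen-based proof of Gibbs' inequality, the $+\infty$ convention for densities that violate the support condition, and a check that the expectation is well defined.
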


\begin{proof}
    The proof follows directly from Proposition \ref{prop1}. In particular, $\Delta[q(\theta|y_{-g})] \ge 0$ for every admissible density $q(\theta|y_{-g})$, and equality holds if and only if $q(\theta|y_{-g})=q^*(\theta|y_{-g})$ almost everywhere. Since the objective itself is a KL divergence, its minimum is zero and its unique optimizer is $q^*(\theta|y_{-g})$.
\end{proof}

\begin{theorem}[Equivalence with leave-data-out Bayes]\label{theo2}
Under the same assumptions,
\begin{equation}
q^*(\theta|y_{-g})=\p(\theta\mid y_{-g}).
\label{eq:main-equivalence}
\end{equation}
\end{theorem}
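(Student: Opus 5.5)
The plan is to compute $r_g(\theta)$ explicitly using Bayes' theorem twice and the chain rule for the joint density of $y=(y_g,y_{-g})$ given $\theta$. First, by the optimal IPR \eqref{eq-optimal_ipr}, the full-data posterior is
\[
\p(\theta\mid y)=\frac{\pi(\theta|I)\,\pi(y|\theta)}{\pi(y|I)}.
\]
Second, without assuming conditional independence, factor the likelihood as
\[
\pi(y|\theta)=\pi(y_g|\theta,y_{-g})\,\pi(y_{-g}|\theta).
\]
Substituting into the definition \eqref{eq-rg} gives
\[
r_g(\theta)=\frac{\pi(\theta|I)\,\pi(y_{-g}|\theta)}{\pi(y|I)}
\]
at every $\theta$ where $\pi(y_g|\theta,y_{-g})>0$. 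The factor $\pi(y_g|\theta,y_{-g})$ cancels exactly, and this is the reason the conditional contribution was chosen as the input in Figure \ref{fig-idr}.

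Next, integrate over $\Theta$. Using $\pi(y_{-g}|I)=\int_\Theta \pi(y_{-g}|\theta)\pi(\theta|I)\dd\theta$, we obtain
\[
Z(y_g)=\frac{\pi(y_{-g}|I)}{\pi(y|I)}.
\]
This shows that $0<Z(y_g)<\infty$ is equivalent to the marginal $\pi(y_{-g}|I)$ being positive and finite. Dividing, the constant $\pi(y|I)$ cancels, and
\[
q^*(\theta|y_{-g})=\frac{r_g(\theta)}{Z(y_g)}=\frac{\pi(\theta|I)\,\pi(y_{-g}|\theta)}{\pi(y_{-g}|I)}=\p(\theta\mid y_{-g}).
\]
The final equality is Bayes' theorem (again \eqref{eq-optimal_ipr}) applied to the retained data $y_{-g}$ alone. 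Combined with Theorem \ref{theo1}, this identifies the unique optimal deletion rule with the leave-data-out posterior.

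The main obstacle is the measure-theoretic bookkeeping rather than the algebra. The division by $\pi(y_g|\theta,y_{-g})$ must be justified wherever $\p(\theta\mid y)>0$, and sets where the conditional likelihood vanishes must be handled. I would first argue that $\{\theta:\pi(y_g|\theta,y_{-g})=0\}$ is contained, up to null sets, in the set where $\p(\theta\mid y)=0$, since the full likelihood vanishes there. On that set I would set $r_g=0$ by convention. Doing so requires that the leave-data-out posterior also put no mass there, which I would add as a mild support assumption; otherwise $r_g$ is ill-defined. With that convention, the two densities agree almost everywhere, which is all that Theorem \ref{theo1} needs.
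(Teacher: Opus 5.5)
Your argument is correct and is essentially the paper's. The paper writes Bayes' theorem in sequential form, $\pi(\theta|y)=\pi(\theta|y_{-g})\pi(y_g|\theta,y_{-g})/\pi(y_g|y_{-g})$, and normalizes to get $Z(y_g)=1/\pi(y_g|y_{-g})$, which is the same constant as your $\pi(y_{-g}|I)/\pi(y|I)$ (the paper derives that form in its appendix proof of Proposition~\ref{prop1}).

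Your support caveat addresses something the paper leaves implicit, and the extra assumption is necessary, not just a convenience: if $\pi(y_g|\theta,y_{-g})=0$ on a set of positive $\pi(\theta|y_{-g})$-mass, the full posterior keeps no record of that region, so $q^*\neq\pi(\theta|y_{-g})$.
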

Thus the optimal information-deletion rule is exactly the leave-data-out posterior.
\begin{proof}
From Bayes' theorem $\displaystyle \pi(\theta|y) = \frac{\pi(\theta|y_{-g})\pi(y_g|\theta, y_{-g})}{ \pi(y_{g}|y_{-g})}$.

Moreover, $\displaystyle \int_\Theta \pi(\theta|y_{-g}) \dd\theta = 1 \implies \int_\Theta \frac{\pi(\theta|y)}{ \pi(y_g|\theta,y_{-g})}\pi(y_{g}|y_{-g}) \dd\theta = 1$ and 
\[
\pi(y_{g}|y_{-g}) = 1\bigg/ \int_\Theta \frac{\pi(\theta|y)}{ \pi(y_g|\theta)}\dd\theta.
\]

Rearranging terms gives
\[
\p(\theta\mid y_{-g})
=\frac{\p(\theta\mid y)/\p(y_g\mid\theta,y_{-g})}
{\int \p(\theta\mid y)/\p(y_g\mid\theta,y_{-g})\dd\theta} = q^*(\theta|y_{-g}).
\]

\end{proof}

If the data are conditionally independent given $\theta$, then the 
result takes the familiar likelihood division form
\begin{equation}
\label{eq-ratio_is}
\p(\theta\mid y_{-g}) 
\propto
\frac{\p(\theta\mid y)}{\p(y_g\mid\theta)}, \nonumber
\end{equation}
which is often the importance sampling proposal used in 
approximate leave-data-out cross validation. Conditional independence
is therefore not part of the information deletion principle; 
it is only a simplification of the data contribution.

\section{Coherent generalized information deletion}
\label{sec-coherent_generalized}

We now ask a more general question: if information is incorporated through a 
loss rather than a likelihood, what deletion rule is justified by a coherence 
principle? We follow the decision-theoretic route of \citet{bissiri2016general} and
construct deletion as a principled information-processing problem.

\subsection{From coherent updating to coherent deletion}

Let $P$ denote a current belief distribution on $\Theta$ and let 
$l(\theta,y)$ be a loss connecting a piece of information $y=(y_g, y_{-g})$
to the parameter of interest. In generalized Bayesian updating, 
\citet{bissiri2016general} define an update operator
$\psi$ and require \emph{sequential coherence},

\begin{equation}
\psi\!\left\{l(\theta,y_{-g}),\psi\bigl(l(\theta,y_g),P\bigr)\right\}
=
\psi\!\left\{l(\theta,y_g)+l(\theta,y_{-g}),P\right\}.
\label{eq:update-coherence}
\end{equation}
Thus, processing $y_g$ and then $y_{-g}$ gives the same belief distribution 
as processing their combined loss in a single step.

For deletion, we impose the corresponding inverse requirement.
\begin{definition}[Sequentially coherent deletion]

Let $\mathcal \phi_l(P)$
denote the belief distribution obtained by removing information represented 
by a loss $l(\theta,.)$ from the current belief $P$. 
We call a deletion rule \emph{sequentially coherent} if, for any two 
removable loss contributions $l_1$ and $l_2$,
\begin{equation}
\mathcal \phi_{l_2}\!\left(\mathcal \phi_{l_1}(P)\right)
=
\mathcal \phi_{l_1+l_2}(P),
\label{eq:deletion-coherence}
\end{equation}
whenever the involved deletion operations are well-defined.
\end{definition}

In words,  removing two pieces of information successively must be equivalent to 
removing their combined information at once. Because loss contributions
are additive, \eqref{eq:deletion-coherence} also implies order invariance
whenever both orders are admissible.

\subsection{A decision loss for deletion}

As in \citet{bissiri2016general}, suppose the post-deletion distribution $Q$ 
is chosen by minimizing a decision loss with two components,
\begin{equation}
\mathcal L_{-}(Q;P,y_g)
=
h_1^{-}(Q,y_g)+h_2(Q,P).
\label{eq:generic-deletion-loss}
\end{equation}
The term $h_1^{-}$ represents the information to be removed, 
whereas $h_2$ penalizes departure from the current belief $P$.

For generalized updating, the data $y_g$ contribution is the expected
loss $\E_Q[l(\theta,y_g)]$. Once this information has already been incorporated, 
deleting the same contribution must reverse its directional effect.
We therefore take
\begin{equation}
h_1^{-}(Q,y_g)
=-\int l(\theta,y_g)\,Q(\dd\theta)
=-\E_Q[l(\theta,y_g)].
\label{eq:h1-deletion}
\end{equation}
This choice preserves the additive representation of information:
\[
h_1^{-}(Q, y)
=h_1^{-}(Q,y_g)+h_1^{-}(Q,y_{-g}),
\]
and is the  sign reversal of the expected-loss term used when 
information is added. 

The remaining question is the form of $h_2$, and as  \citet{bissiri2016general},
we do not assume it to be given by a Kullback--Leibler divergence. Instead,
we restrict $h_2$ to the class of differentiable $g$-divergences,
\begin{equation}
h_2(Q,P)
=D_g(Q,P)
=
\int g\!\left(\frac{\dd Q}{\dd P}\right)\dd P,
\label{eq:g-divergence}
\end{equation}
for $Q\ll P$, where $g$ is differentiable and convex with $g(1)=0$, and we 
derive the discrepancy  for which the deletion rule is 
 sequentially coherent.

\begin{theorem}[Coherent deletion characterizes the deletion discrepancy]
\label{thm:coherent-deletion-kl}
Define the post-deletion belief after removing the loss $l(\theta, .)$ by 
\begin{equation}
\mathcal \phi_l(P)
=
\arg\min_Q
\left\{
-\E_Q[l(\theta, .)] + D_g(Q,P)
\right\}.
\label{eq:deletion-g-objective}
\end{equation}
Assume that the relevant
minimizers exist uniquely and that the deletion operator satisfies
sequential coherence \eqref{eq:deletion-coherence} 
for every admissible $P$ and $l$, then
\begin{equation}
D_g(Q,P) \propto \KL(Q\|P).
\label{eq:coherence-kl-result}
\end{equation}
\end{theorem}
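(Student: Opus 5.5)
We will follow the argument of \citet{bissiri2016general} for coherent updating and run it with the sign of the loss reversed. \textbf{First step:} write the first-order optimality conditions for \eqref{eq:deletion-g-objective}. Parametrize $Q$ by its density $q=\dd Q/\dd P$ and minimize
\[
-\int l(\theta)\,q(\theta)\,P(\dd\theta)+\int g\bigl(q(\theta)\bigr)\,P(\dd\theta)
\]
subject to $\int q\,\dd P=1$, introducing a Lagrange multiplier $\lambda$. Pointwise stationarity gives
\[
g'\bigl(q(\theta)\bigr)=l(\theta)+\lambda .
\]
Because $g$ is convex, and strictly monotone on the relevant range under the uniqueness assumption, $g'$ is invertible there. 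Hence the minimizer is $\dd\phi_l(P)/\dd P=(g')^{-1}(l+\lambda_l(P))$, where $\lambda_l(P)$ is fixed by normalization.

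\textbf{Second step:} insert this representation into the coherence requirement \eqref{eq:deletion-coherence}. Let $f=(g')^{-1}$. Then
\[
\frac{\dd\phi_{l_2}(\phi_{l_1}(P))}{\dd P}=f\bigl(l_2+\lambda_2\bigr)\,f\bigl(l_1+\lambda_1\bigr),
\]
where $\lambda_2$ is computed relative to $\phi_{l_1}(P)$. Joint deletion instead gives $f(l_1+l_2+\lambda_{12})$. Coherence for all admissible $P,l_1,l_2$ therefore forces the functional equation
\[
f(a)\,f(b)=f(a+b+c)
\]
for all $a,b$ in an open range, with $c$ a constant depending only on the normalizers.

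To make this hold pointwise rather than only as an equality of measures, we take $P$ and the losses rich enough. For example, let $l_1,l_2$ be simple functions taking two values on disjoint sets of positive $P$-mass. Varying these values lets $a=l_1(\theta)+\lambda_1$ and $b=l_2(\theta)+\lambda_2$ range over an open set independently. This decoupling is the main obstacle: the multipliers depend on the losses, so we must check that enough independent freedom survives. We will handle it by comparing two points $\theta,\theta'$ and using ratios, which cancel the normalizing constants.

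\textbf{Third step:} solve the functional equation. Set $F(x)=\log f(x-c)$. The equation becomes Cauchy's equation $F(a+b)=F(a)+F(b)$. Since $f$ is continuous (it is the inverse of the derivative of a differentiable convex function), the solutions are $F(x)=kx$, so $f(x)=\exp\{k(x+c)\}$ with $k>0$. Inverting gives $g'(t)=k^{-1}\log t+\text{const}$, and integrating yields
\[
g(t)=k^{-1}\,t\log t+\alpha t+\beta .
\]
The condition $g(1)=0$ gives $\beta=-\alpha$. The affine part $\alpha(t-1)$ integrates to zero against $P$, since $\int q\,\dd P=1$. Hence $D_g(Q,P)=k^{-1}\KL(Q\|P)$, which is \eqref{eq:coherence-kl-result}.

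As a consistency check, with $g(t)=t\log t$ the deletion rule becomes $\dd\phi_l(P)\propto e^{l}\,\dd P$. This is indeed coherent, because $e^{l_2}e^{l_1}=e^{l_1+l_2}$. It also inverts the generalized Bayes update $\propto e^{-l}\,\dd P$, in agreement with Theorem \ref{theo2}.
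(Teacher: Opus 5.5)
\textbf{There is a genuine gap in your second step.} The number $c=\lambda_{12}-\lambda_1-\lambda_2$ is built from multipliers that depend on $(P,l_1,l_2)$. So coherence only gives you, one configuration at a time, $f(a)f(b)=f(a+b+c)$ at the value pairs realized on your partition sets, and $c$ may change when the configuration changes. Varying the values of your simple functions is exactly such a change. You therefore never obtain an open range of $(a,b)$ with one fixed $c$, which is what the Cauchy equation in your third step requires.

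Your proposed remedy, comparing two points through ratios, does not remove the multipliers. They enter additively \emph{inside} $f$, so $f(l(\theta)+\lambda)/f(l(\theta')+\lambda)$ is free of $\lambda$ only if $f$ is already exponential, which is what you are trying to prove. The normalizers cancel on the $g'$ scale, through differences: $g'(q(\theta))-g'(q(\theta'))=l(\theta)-l(\theta')$. Writing $u=f(a)$ and $v=f(b)$, your relation becomes $g'(uv)-g'(u)-g'(v)=c$. Differencing between two points gives $g'(xy)-g'(x'y')=\{g'(x)-g'(x')\}+\{g'(y)-g'(y')\}$, with $(x',y')$ tied to $(x,y)$ by the two normalization constraints. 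This is still a four-variable relation, not Cauchy's equation.

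\textbf{The missing ingredient} is a configuration in which one point carries density one (or density tending to one) in both deletion steps. This pins the constant to $c=g'(1)-2g'(1)=-g'(1)$ for every configuration.

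The paper does this on $\Theta=\{\theta_1,\theta_2\}$. Its first-order condition is already in differenced form, $A(p,p_0)=\Delta l$ with $A(a,b)=g'(a/b)-g'\{(1-a)/(1-b)\}$, and coherence gives $A(p_2,p_0)=A(p_1,p_0)+A(p_2,p_1)$. The substitution $p_0=t$, $p_1=tx$, $p_2=txy$ with $t\downarrow 0$ sends both densities at $\theta_2$ to $1$. Continuity of $g'$ then yields $\phi(xy)=\phi(x)+\phi(y)$ for $\phi=g'-g'(1)$.

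Within your own framework you could instead add a third set on which both step densities equal $1$ exactly. This is possible because any normalized target density $q$ is produced by the loss $l=g'(q)$, since the first-order condition is sufficient for this convex problem. On that set, coherence forces $\lambda_{12}-\lambda_1-\lambda_2=-g'(1)$.

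With either device in place, the rest of your argument (Cauchy equation, inversion, integration, vanishing affine term) goes through. Working with $g'$ directly, as the paper does, also removes your extra requirement that $g'$ be invertible.
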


The proof is given in the Supplementary Material, and follows the same  arguments by
\citet{bissiri2016general} for generalized belief updating.

Finally, from  Theorem~\ref{thm:coherent-deletion-kl},  the coherent deletion
rule to  remove the information  brought by  $l(\theta,y_g)$ is given by

\begin{equation}
\arg\min_Q
\left\{
-\E_Q[l(\theta,y_g)]
+
\KL(Q\|P)
\right\},
\label{eq:coherent-deletion-objective}
\end{equation}
where $P$ is the current belief distribution based of $l(\theta, y)$.

\subsection{Solution of the coherent deletion problem}

\begin{proposition}[Coherent generalized deletion]
\label{prop:generalized-deletion-solution}
Let $P$ have density $p$ and suppose
\begin{equation}
Z(y_g)
=
\int_{\Theta} p(\theta)\exp\{l(\theta,y_g)\}\dd\theta
<\infty.
\label{eq:deletion-normalizer}
\end{equation}
Then the unique minimizer of \eqref{eq:coherent-deletion-objective} has density
\begin{equation}
q_g^*(\theta)
=
\frac{p(\theta)\exp\{l(\theta,y_g)\}}
{Z(y_g)}. 
\label{eq:generalized-deletion-solution}
\end{equation}
\end{proposition}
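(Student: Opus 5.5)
The plan is to mimic the proof of Proposition~\ref{prop1}: rewrite the objective in \eqref{eq:coherent-deletion-objective} as a single Kullback--Leibler divergence plus a constant that does not depend on $Q$. Uniqueness then follows from the fact that a KL divergence vanishes exactly when its two arguments agree.

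First I fix the admissible class. Since $\KL(Q\|P)=+\infty$ unless $Q\ll P$, only $Q$ with a density $q$ with respect to the same dominating measure and $q=0$ wherever $p=0$ need to be considered. For such $Q$ the objective is
\[
-\E_Q[l(\theta,y_g)]+\KL(Q\|P)=\int_\Theta q(\theta)\Bigl\{\log\frac{q(\theta)}{p(\theta)}-l(\theta,y_g)\Bigr\}\dd\theta
=\int_\Theta q(\theta)\log\frac{q(\theta)}{p(\theta)\exp\{l(\theta,y_g)\}}\dd\theta .
\]
The assumption $Z(y_g)<\infty$, together with $Z(y_g)>0$ (automatic, since $p$ is a density and $\exp\{\cdot\}>0$), makes $q_g^*$ in \eqref{eq:generalized-deletion-solution} a genuine probability density. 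Substituting $p(\theta)\exp\{l(\theta,y_g)\}=Z(y_g)\,q_g^*(\theta)$ gives
\[
-\E_Q[l(\theta,y_g)]+\KL(Q\|P)=\KL(Q\|q_g^*)-\log Z(y_g).
\]
Since $-\log Z(y_g)$ is constant in $Q$, Gibbs' inequality gives the result. The objective is bounded below by $-\log Z(y_g)$, and this bound is attained if and only if $Q=q_g^*$ almost everywhere, which yields existence and uniqueness of the minimizer. This is exactly the argument of Theorem~\ref{theo1}, with $r_g$ replaced by $p\,e^{l}$.

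The main obstacle is justifying the splitting of integrals, because $-\E_Q[l]$ may equal $-\infty$ or $\KL(Q\|P)$ may equal $+\infty$, producing undefined forms $\infty-\infty$. I would handle this as follows.
\begin{itemize}
\item Restrict attention to $Q$ with $\KL(Q\|P)<\infty$ and $\E_Q|l(\theta,y_g)|<\infty$, which is the natural domain where the objective is well defined.
\item Observe that $q_g^*$ lies in this domain whenever $\E_{q_g^*}|l|<\infty$, which I would assume or check.
\item On this domain the pointwise identity $\log(q/p)-l=\log(q/q_g^*)-\log Z(y_g)$ holds $Q$-almost surely, so integrating it term by term is legitimate.
\end{itemize}
Outside this domain the objective is either $+\infty$ or undefined, so such $Q$ cannot be minimizers, provided the problem is posed, as in \eqref{eq:coherent-deletion-objective}, over $Q$ for which the objective makes sense.
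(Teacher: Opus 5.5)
Your proposal is correct and follows the paper's own argument: both write $-\E_Q[l(\theta,y_g)]+\KL(Q\|P)=\int q\log\{q/(p\,e^{l})\}\dd\theta=\KL(Q\|Q_g^*)-\log Z(y_g)$ and conclude from nonnegativity of KL, with equality only at $Q=Q_g^*$. Your integrability bookkeeping goes beyond the paper, but the extra hypothesis $\E_{q_g^*}|l(\theta,y_g)|<\infty$ you would add is not needed if you read the objective as the single integral $\int q\log\{q/(p\,e^{l})\}\dd\theta$, as the paper's one-line computation does, since that integral is always well defined in $[-\log Z(y_g),\infty]$ and attains its lower bound exactly at $q_g^*$.
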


\begin{proof}
Define $q_g^*$ by \eqref{eq:generalized-deletion-solution}. Then, for 
any admissible $Q$ with density $q$,
\begin{align*}
\KL(Q\|P)-\E_Q[l(\theta,y_g)]
&=
\int q(\theta)
\log\frac{q(\theta)}{p(\theta)e^{l(\theta,y_g)}}\dd\theta \\
&=
\KL(Q\|Q_g^*)-\log Z_g.
\end{align*}
Since the second term is independent of $Q$, 
non-negativity of KL divergence gives the unique minimizer $Q_g^*$.
\end{proof}

\begin{corollary}[Recovering the ordinary leave-data-out posterior]
\label{cor:equivalence_1}
Whenever the relevant normalizing constants are finite, $p =\pi(\theta\mid y)$,  and 
$l(\theta,y_g)=-\log \p(y_g\mid\theta, y_{-g})$
\begin{equation}
q_g^*(\theta)
=
\frac{\p(\theta\mid y)/ \p(y_g \mid \theta, y_{-g})}
{\int_{\Theta} \p(\theta\mid y)/ \p(y_g \mid \theta, y_{-g}) \dd \theta}, 
\end{equation}
recovering the ordinary leave-data-out posterior from \ref{theo1}.
\end{corollary}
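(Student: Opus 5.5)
The corollary follows by direct substitution into Proposition~\ref{prop:generalized-deletion-solution}, followed by an appeal to Theorems~\ref{theo1} and~\ref{theo2}. The plan is to take $P$ with density $p(\theta)=\p(\theta\mid y)$ and $l(\theta,y_g)=-\log\p(y_g\mid\theta,y_{-g})$. Then $\exp\{l(\theta,y_g)\}=1/\p(y_g\mid\theta,y_{-g})$, so the normalizer \eqref{eq:deletion-normalizer} becomes
\[
Z(y_g)=\int_\Theta \frac{\p(\theta\mid y)}{\p(y_g\mid\theta,y_{-g})}\dd\theta ,
\]
which coincides with the constant $Z(y_g)$ of \eqref{eq-rg}. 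Its finiteness is exactly the hypothesis ``the relevant normalizing constants are finite''. Proposition~\ref{prop:generalized-deletion-solution} therefore applies and gives the unique minimizer $q_g^*(\theta)=\p(\theta\mid y)\,\p(y_g\mid\theta,y_{-g})^{-1}/Z(y_g)$, which is the displayed formula.

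Next I will identify this density with the earlier results. It equals $r_g(\theta)/Z(y_g)$, which is precisely $q^*(\theta\mid y_{-g})$ from Theorem~\ref{theo1}. By Theorem~\ref{theo2} it is therefore the leave-data-out posterior $\p(\theta\mid y_{-g})$. I will also remark that with this choice of loss, the objective \eqref{eq:coherent-deletion-objective} reduces, up to the additive constant $\log\{\pi(y|I)/\pi(y_{-g}|I)\}$, to the information-deletion loss \eqref{eq-info_loss}. This makes transparent that the two optimization problems share the same minimizer, not merely that their solutions happen to agree.

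The only delicate step is the measure-theoretic bookkeeping. We need $\p(y_g\mid\theta,y_{-g})>0$ on the support of $\p(\theta\mid y)$, so that $l$ is finite there and $r_g$ is well defined. We also need $Z(y_g)>0$, which is automatic because the integrand is positive on a set of positive $\p(\cdot\mid y)$-mass. Once these points are noted, the rest is substitution, so I expect no real obstacle beyond stating these positivity conditions explicitly.
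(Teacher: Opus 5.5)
Your proposal is correct and follows the route the paper intends: the paper gives no separate proof, and the corollary comes from substituting $p=\p(\theta\mid y)$ and $l(\theta,y_g)=-\log\p(y_g\mid\theta,y_{-g})$ into Proposition~\ref{prop:generalized-deletion-solution}, so that $q_g^*=r_g/Z(y_g)$ is the minimizer $q^*$ of Theorem~\ref{theo1}. Your extra remark is also correct: the coherent-deletion objective equals $\Delta$ plus the constant $\log\{\pi(y|I)/\pi(y_{-g}|I)\}$, which shows the two problems share the same minimizer rather than merely agreeing at the solution.
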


\subsection{Equivalence with generalized Bayesian updating on the retained data}

Let the full generalized Bayesian posterior be
\begin{equation}
\p(\theta\mid y)
\propto
\p(\theta\mid I)
\exp\left\{-[l(\theta,y_{-g})+l(\theta,y_g)]\right\},
\label{eq:full-generalized-posterior}
\end{equation}
where $l(\theta,y_g)$ is the loss contribution associated with the observations
$y_g$ to be removed and $l(\theta,y_{-g})$ is the contribution of the 
retained observations $y_{-g}$.

\begin{proposition}[Equivalence with generalized Bayes on the retained data]
\label{prop:equi-gen-bayes}
The optimal post-deletion belief $q_g^*(\theta)$ is equivalent to the 
generalized leave-data-out posterior belief.
\begin{equation}
q_g^*(\theta) = \pi(\theta|y_{-g}), \text{ where } \pi(\theta|y_{-g})
\propto\p(\theta\mid I)\exp\left\{-l(\theta,y_{-g})\right\}. 
\end{equation}
\end{proposition}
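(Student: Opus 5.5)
The plan is to substitute the full generalized posterior \eqref{eq:full-generalized-posterior} for $p$ in the closed form of Proposition~\ref{prop:generalized-deletion-solution}, cancel the factor $\exp\{-l(\theta,y_g)\}$ against $\exp\{l(\theta,y_g)\}$, and then check that the normalizing constants agree. First I will write the full posterior with its normalizer made explicit:
\[
p(\theta)=\p(\theta\mid y)=\frac{\p(\theta\mid I)\exp\{-l(\theta,y_{-g})-l(\theta,y_g)\}}{C(y)},\qquad
C(y)=\int_\Theta \p(\theta\mid I)\exp\{-l(\theta,y_{-g})-l(\theta,y_g)\}\dd\theta,
\]
assuming $0<C(y)<\infty$ so that the full generalized posterior is well defined.

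Next I plug this into \eqref{eq:generalized-deletion-solution}. The numerator becomes
\[
p(\theta)\exp\{l(\theta,y_g)\}=\frac{\p(\theta\mid I)\exp\{-l(\theta,y_{-g})\}}{C(y)},
\]
because the $y_g$ loss terms cancel pointwise, and this uses only the additivity of the loss. Integrating over $\Theta$ gives the normalizer
\[
Z(y_g)=\frac{C(y_{-g})}{C(y)},\qquad C(y_{-g})=\int_\Theta \p(\theta\mid I)\exp\{-l(\theta,y_{-g})\}\dd\theta .
\]
Dividing the numerator by $Z(y_g)$ cancels $C(y)$ and leaves $\p(\theta\mid I)\exp\{-l(\theta,y_{-g})\}/C(y_{-g})$, which is exactly the generalized posterior based on the retained data. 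This is the same argument as in Theorem~\ref{theo2}, with the likelihood replaced by $\exp\{-l\}$.

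The main obstacle is the integrability bookkeeping, not the algebra. Proposition~\ref{prop:generalized-deletion-solution} needs $Z(y_g)<\infty$, and by the identity above this holds exactly when the retained-data generalized posterior is normalizable, i.e.\ $C(y_{-g})<\infty$. I will state this as the standing assumption. I will also note that $Z(y_g)>0$, since the integrand is positive wherever the prior is, and that the ratio $C(y_{-g})/C(y)$ is then finite and positive. Uniqueness of $q_g^*$ is inherited directly from Proposition~\ref{prop:generalized-deletion-solution}, so the two densities agree almost everywhere.
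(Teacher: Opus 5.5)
Your proposal is correct and takes the same route as the paper. You substitute the full generalized posterior into the closed form of Proposition~\ref{prop:generalized-deletion-solution} and cancel the $y_g$ loss factor pointwise. Your explicit normalizer identity $Z(y_g)=C(y_{-g})/C(y)$ and the integrability conditions make precise what the paper's proportionality argument leaves implicit. Your final expression also correctly keeps $l(\theta,y_{-g})$, whereas the paper's displayed chain writes $l(\theta,y_g)$ by a typo.
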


\begin{proof}
Applying Proposition~\ref{prop:generalized-deletion-solution} 
with $p=\p(\theta\mid y)$ gives
\begin{align*}
q_g^*(\theta)
&\propto
\p(\theta\mid y)\exp\{l(\theta,y_g)\} \propto
\p(\theta\mid I)\exp\{-l(\theta,y_g)\}.
\end{align*}
The right-hand side is  the generalized Bayesian update obtained from the
original prior using only the retained loss $l(\theta,y_g)$
\end{proof}

\section{Generalized Variational Bayesian unlearning}
\label{sec-gen_var_bayes}
In both ordinary and generalized Bayes,  the unrestricted deletion problem
is infinite-dimensional. Its practical importance comes from the fact that the 
same objective can be optimized over a restricted variational family $\mathcal Q$.

By Proposition \ref{prop1} and Theorem \ref{theo2},
\begin{align}
\tilde q(\theta\mid y_{-g})
&=\arg\min_{q\in\mathcal Q}\Delta(q) \nonumber \\
&=\arg\min_{q\in\mathcal Q}
\KL\!\left(q\,\middle\|\,\p(\theta\mid y_{-g})\right) \label{eq-vbu_1}\\
&= \arg\min_{q\in\mathcal Q}
\left\{
\E_q[\log\pi(y_g\mid\theta,y_{-g})] +
\KL\!\left(q\,\middle\|\,\p(\theta\mid y)\right) 
\right\} +  
Z(y_g), \label{eq-vbu_2}
\end{align}

where $Z(y_{g})$ is given in Equation\eqref{eq-rg}.

Equation \eqref{eq-vbu_1} gives a particularly simple
interpretation of variational Bayesian unlearning: 
it is the forward-KL projection of the exact retrained 
posterior onto the chosen variational family.
The approximation is therefore introduced only
by restricting the feasible set, not by modifying
the underlying information-deletion principle.  This distinction also 
clarifies the relationship 
with existing variational unlearning methods such as 
\citet{nguyen2020variational}. 
Those methods can be interpreted as approximations to the same 
target posterior, while the present formulation identifies 
the unrestricted target and the corresponding variational 
objective directly from an information conservation principle. 

In addition, Equation \eqref{eq-vbu_2} provides a practical interpretation of the
information deletion procedure. The minimization of the first term is 
equivalent to forgetting the information content of $\pi(y_g\mid\theta,y_{-g})$, while the second term acts as a regularization, preventing 
$\tilde{q}(\theta|y_{-g})$ from departing too much from the posterior
distribution given the complete data $\pi(\theta|y)$.
Additionally, the regularization term  illustrates the
importance of an accurate posterior distribution, that is,
improved accuracy in  $\pi(\theta|y)$ should translate into 
better  approximations $\tilde{q}(\theta|y_{-g})$.

Similarly, the same restriction applies to coherent generalized deletion. 
If $p=\p(\theta \mid y)$ and $\mathcal Q$ is a variational family, then
from Propositions  \ref {prop:generalized-deletion-solution} and 
\ref{prop:equi-gen-bayes} 
\begin{align*}
\tilde q_{g}(\theta)
&=
\arg\min_{q\in\mathcal Q}
\left\{
-\E_q[l(\theta, y_g)]
+
\KL\!\left(q\middle\|\p(\theta \mid y)\right)
\right\} \\
&=
\arg\min_{q\in\mathcal Q}
\left\{
-\E_q[l(\theta, y_g)]
+
\KL\!\left(q\middle\| \frac{\pi(\theta) 
\exp\{-l(\theta, y) \}}{Z(y)} \right)
\right\} \\
&=
\arg\min_{q\in\mathcal Q}
\left\{
\KL\!\left(q\middle\| \pi(\theta \mid y_{-g}) \right) -\log \frac{Z(y_g)}{Z(y)}
\right \}, 
\label{eq:generalized-variational-unlearning}
\end{align*}
where $Z(y_g)$ is given by Equation \eqref{eq:deletion-normalizer} and $Z(y) = \int \pi(\theta) 
\exp\{-l(\theta, y) \} \dd \theta$. 

Generalized variational Bayesian unlearning is therefore the forward-KL projection of the coherently 
deleted generalized posterior onto $\mathcal Q$.

\section{Discussion }
\label{sec-discussion}

We revisited Bayesian information processing from 
the opposite direction: once information has been incorporated 
into a belief distribution, how should that information be removed? 

The main result establishes a learning--unlearning duality. 
In ordinary Bayesian learning, the information contribution of $y_g$ is 
multiplied into the prior and normalized to obtain the posterior. 
In information deletion, the same contribution is divided out from the
full posterior and the result is renormalized. 
In both directions, Bayes' theorem provides the
optimal information transformation.

The result has two immediate statistical interpretations.
First, every leave-data-out posterior is an exact solution 
of an information-deletion problem. This supplies a variational
characterization of leave-data-out Bayesian inference
that does not require refitting to be formulated. 
Second, approximations to data deletion can be analyzed 
through the restricted optimization problem; 
the resulting approximation error is naturally measured 
by the KL divergence to the exact leave-data-out posterior.

The generalized result strengthens this duality.
Rather than postulating a generalized deletion objective,
we impose sequential coherence on the removal operation.
Within the same broad class of differentiable $g$-divergences 
used in coherent generalized updating, coherence forces the 
discrepancy from the current posterior to be KL divergence.
The deleted information enters through the negative expected loss, 
and the resulting optimizer is exactly the generalized Bayesian
posterior based on the retained data. 

A natural next step of this work is its extension to the
possibilistic inferential framework 
\citep{zadeh1999fuzzy, houssineau2026bayes}, where beliefs are represented by
possibilities rather than probabilities.

\section*{Funding}
This publication is based upon work supported by King Abdullah University of Science and
Technology (KAUST) under Award No. ORFS-CRG11-2022-5015.

\section*{Appendix}

\subsection*{\textbf{Proof of Proposition} \ref{prop1}}
\label{sec:supp-proof-prop1}

\begin{proof}
From Bayes' theorem and the decomposition $y=(y_g,y_{-g})$,
\[
\pi(y|I) = \frac{\pi(\theta|I) \pi(y|\theta)}{\pi(\theta|y)}, \text{ and } \pi(y_{-g}|I) = \frac{\pi(\theta|I) \pi(y_{-g}|\theta)}{\pi(\theta|y_{-g})}.
\]
Then
\[
\frac{\pi(y_{-g}|I)}{\pi(y|I)} = \frac{\pi(y_{-g}|\theta)\pi(\theta|y)} {\pi(\theta|y_{-g}) \pi(y|\theta)} = \frac{\pi(\theta|y)}{\pi(y_g|\theta, y_{-g}) \pi(\theta|y_{-g})}.
\]
and,
\[
\pi(\theta|y_{-g}) = \frac{\pi(\theta|y)}{\pi(y_g|\theta, y_{-g})} \Big / \frac{\pi(y_{-g}|I)}{\pi(y|I)} .
\]
By integrating both sides, we obtain 
\[
\frac{\pi(y_{-g}|I)}{\pi(y|I)} = \int_{\Theta}  \frac{\pi(\theta|y)}{\pi(y_g|\theta, y_{-g})} \dd \theta = Z_g.
\]

Finally,  replace $\displaystyle \frac{\pi(y_{-g}|I)}{\pi(y|I)} $ into $\Delta[q(\theta|y_{-g})]$ from Equation \eqref{eq-info_loss} and:

\begin{align*} 
  \Delta[q(\theta|y_{-g})]  &= \mathbb{E}_{q(\theta|y_{-g})}\bigg[\log \frac{q(\theta|y_{-g}) }{ \pi(\theta|y)}\frac{\pi(y_{-g}|I)\pi(y_g|\theta, y_{-g})}{\pi(y|I) } \bigg] \\
   &=\mathbb{E}_{q(\theta|y_{-g})}\bigg[\log \frac{q(\theta|y_{-g}) }{ \pi(\theta|y)/ \pi(y_g|\theta, y_{-g})}\frac{\pi(y_{-g}|I)}{\pi(y|I) } \bigg] \\
   &= \mathbb{E}_{q(\theta|y_{-g})}\bigg[\log \frac{q(\theta|y_{-g}) }{ r_g(\theta)}Z_g \bigg] \\
\Delta[q(\theta|y_{-g})] &= \KL\!\left(q(\theta|y_{-g} )\,\middle\|\,q^*(\theta|y_{-g})\right).
\end{align*}
\end{proof}

\subsection*{Proof of Theorem \ref{thm:coherent-deletion-kl}}
\label{sec:supp-coherence-proof}


\begin{proof}
We adapt the KL characterization argument for coherent generalized updating. 
Similar to \citet{bissiri2016general}, we consider a two-point parameter 
space $\Theta=\{\theta_1,\theta_2\}$. Write
\[
P=p_0\delta_{\theta_1}+(1-p_0)\delta_{\theta_2},
\qquad
Q=p\delta_{\theta_1}+(1-p)\delta_{\theta_2},
\]
with $p_0,p\in(0,1)$. For a loss $l$, the deletion objective 
in \eqref{eq:deletion-g-objective} is
\begin{align*}
\mathcal L_{-}(p;p_0,l)
={}&-p\,l(\theta_1)-(1-p)l(\theta_2) \\
&+p_0g\!\left(\frac{p}{p_0}\right)
 +(1-p_0)g\!\left(\frac{1-p}{1-p_0}\right).
\end{align*}
At an interior minimizer the first-order condition is
\begin{equation}
g'\!\left(\frac{p}{p_0}\right)
-g'\!\left(\frac{1-p}{1-p_0}\right)
=
l(\theta_1)-l(\theta_2).
\label{eq:supp-foc}
\end{equation}
Define
\[
A(a,b)
=
g'\!\left(\frac{a}{b}\right)
-g'\!\left(\frac{1-a}{1-b}\right).
\]
Suppose deletion of $l_1$ maps $p_0$ to $p_1$, and 
subsequent deletion of $l_2$ maps $p_1$ to $p_2$. Equation~\eqref{eq:supp-foc} gives
\[
A(p_1,p_0)=\Delta l_1,
\qquad
A(p_2,p_1)=\Delta l_2,
\]
where $\Delta l_j=l_j(\theta_1)-l_j(\theta_2)$. Sequential coherence requires $p_2$ also to be the minimizer obtained by deleting $l_1+l_2$ directly from $p_0$, and hence
\[
A(p_2,p_0)=\Delta l_1+\Delta l_2.
\]
Therefore
\begin{equation}
A(p_2,p_0)=A(p_1,p_0)+A(p_2,p_1).
\label{eq:supp-functionalA}
\end{equation}
Set $p_0=t$, $p_1=tx$, and $p_2=txy$. Substitution into \eqref{eq:supp-functionalA} yields
\begin{align*}
g'(xy)-g'\!\left(\frac{1-txy}{1-t}\right)
={}&g'(x)-g'\!\left(\frac{1-tx}{1-t}\right) \\
&+g'(y)-g'\!\left(\frac{1-txy}{1-tx}\right).
\end{align*}
Letting $t\downarrow0$ and using continuity of $g'$ gives
\[
g'(xy)-g'(1)
=
\{g'(x)-g'(1)\}+\{g'(y)-g'(1)\}.
\]
Thus $\phi(x)=g'(x)-g'(1)$ satisfies the multiplicative Cauchy equation
\[
\phi(xy)=\phi(x)+\phi(y).
\]
Continuity implies $\phi(x)=c\log x$, so
\[
g'(x)=c\log x+g'(1).
\]
Integrating and using $g(1)=0$ gives
\[
g(x)=cx\log x+\{g'(1)-c\}(x-1).
\]
The linear term integrates to zero in a $g$-divergence because
\[
\int\left(\frac{\dd Q}{\dd P}-1\right)\dd P=0.
\]
Consequently,
\[
D_g(Q,P)
=c\int \log\!\left(\frac{\dd Q}{\dd P}\right)\dd Q
=c\,\KL(Q\|P).
\]
Convexity gives $c\ge0$, and uniqueness of the minimizer rules out 
the degenerate case $c=0$. Hence $c>0$,
proving Theorem~\ref{thm:coherent-deletion-kl}.
\end{proof}

\bibliography{Bibliography-MM-MC.bib}

\begin{thebibliography}{}

\bibitem[Bissiri et~al., 2016]{bissiri2016general}
Bissiri, P.~G., Holmes, C.~C., and Walker, S.~G. (2016).
\newblock A general framework for updating belief distributions.
\newblock {\em Journal of the Royal Statistical Society Series B: Statistical Methodology}, 78(5):1103--1130.

\bibitem[Blei et~al., 2017]{blei2017variational}
Blei, D.~M., Kucukelbir, A., and McAuliffe, J.~D. (2017).
\newblock Variational inference: A review for statisticians.
\newblock {\em Journal of the American statistical Association}, 112(518):859--877.

\bibitem[Bourtoule et~al., 2021]{bourtoule2021machine}
Bourtoule, L., Chandrasekaran, V., Choquette-Choo, C.~A., Jia, H., Travers, A., Zhang, B., Lie, D., and Papernot, N. (2021).
\newblock Machine unlearning.
\newblock In {\em 2021 IEEE Symposium on Security and Privacy (SP)}, pages 141--159. IEEE.

\bibitem[Houssineau and Ch{\'e}rief-Abdellatif, 2026]{houssineau2026bayes}
Houssineau, J. and Ch{\'e}rief-Abdellatif, B.-E. (2026).
\newblock From bayes' rule to bayes rules: Information processing, bayes' theorem, and imprecise probabilities.
\newblock In {\em Forty-Second Annual Conference on Uncertainty in Artificial Intelligence}.

\bibitem[Khan, 2025]{khan2025knowledge}
Khan, M.~E. (2025).
\newblock Knowledge adaptation as posterior correction.
\newblock {\em arXiv preprint arXiv:2506.14262}.

\bibitem[Knoblauch et~al., 2022]{knoblauch2022optimization}
Knoblauch, J., Jewson, J., and Damoulas, T. (2022).
\newblock An optimization-centric view on bayes' rule: Reviewing and generalizing variational inference.
\newblock {\em Journal of Machine Learning Research}, 23(132):1--109.

\bibitem[Nguyen et~al., 2020]{nguyen2020variational}
Nguyen, Q.~P., Low, B. K.~H., and Jaillet, P. (2020).
\newblock Variational bayesian unlearning.
\newblock {\em Advances in Neural Information Processing Systems}, 33:16025--16036.

\bibitem[Vehtari and Ojanen, 2012]{vehtari2012survey}
Vehtari, A. and Ojanen, J. (2012).
\newblock A survey of bayesian predictive methods for model assessment, selection and comparison.
\newblock {\em Statistics surveys}.

\bibitem[Zadeh, 1999]{zadeh1999fuzzy}
Zadeh, L.~A. (1999).
\newblock Fuzzy sets as a basis for a theory of possibility.
\newblock {\em Fuzzy sets and systems}, 100:9--34.

\bibitem[Zellner, 1988]{zellner1988optimal}
Zellner, A. (1988).
\newblock Optimal information processing and bayes's theorem.
\newblock {\em The American Statistician}, 42(4):278--280.

\end{thebibliography}
\end{document}